\newtheorem{lemma}{Lemma}
\newtheorem{proposition}{Proposition}
\newtheorem{corollary}{Corollary}
\newtheorem{fact}{Fact}
\newtheorem{remark}{Remark}
\newtheorem{assumption}{Assumption}
\def\begcen{\begin{center}}
	\def\endcen{\end{center}}
\newcommand{\col}{ \mbox{col} }
\def\cale{{\cal E}}
\def\calr{{\cal R}}
\def\calj{{\cal J}}
\def\cale{{\cal E}}
\def\call{{\cal L}}
\def\hal{{1 \over 2}}
\def\liminf{\lim_{t \to \infty}}
\def\L2{{\cal L}_2}
\def\L2e{{\cal L}_{2e}}
\def\rea{\mathbb{R}}
\def\sign{\mbox{sign}}
\def\x{{x}}
\def\begmat#1{\begin{bmatrix}#1\end{bmatrix}}
\def\begali#1{\begin{align}{#1}\end{align}}
\def\begalis#1{\begin{align*}{#1}\end{align*}}
\def\begequarr{\begin{eqnarray}}
	\def\endequarr{\end{eqnarray}}
\def\begequarrs{\begin{eqnarray*}}
	\def\endequarrs{\end{eqnarray*}}
\def\begarr{\begin{array}}
	\def\endarr{\end{array}}
\def\begequ{\begin{equation}}
	\def\endequ{\end{equation}}
\def\lab{\label}
\def\begdes{\begin{description}}
	\def\enddes{\end{description}}
\def\begenu{\begin{enumerate}}
	\def\begite{\begin{itemize}}
		\def\endite{\end{itemize}}
	\def\endenu{\end{enumerate}}
\def\lef[{\left[\begin{array}}
	\def\rig]{\end{array}\right]}
\def\begcen{\begin{center}}
	\def\endcen{\end{center}}
\def\begrem{\begin{remark}\rm}
	\def\endrem{\end{remark}}
\def\begassum{\begin{assumption}}
	\def\endassum{\end{assumption}}
\def\begassums{\begin{assumption*}}
	\def\endassums{\end{assumption*}}
\def\begassu{\begin{ass}}
	\def\endassu{\end{ass}}
\def\beglem{\begin{lemma}}
	\def\endlem{\end{lemma}}
\def\begcor{\begin{corollary}}
	\def\endcor{\end{corollary}}
\def\begfac{\begin{fact}}
	\def\endfac{\end{fact}}
\DeclareMathOperator*{\argmin}{arg\,min}
\def\liminf{\lim_{t \to \infty}}
\def\L2e{{\cal L}_{2e}}
\def\rea{\mathbb{R}}
\def\sign{\mbox{sign}}
\def\col{\mbox{col}}
\def\hal{{1 \over 2}}
\def\begsubequ{\begin{subequations}}
	\def\endsubequ{\end{subequations}}
\newcommand\dif{\mathrel{\overset{\makebox[0pt]{\mbox{\normalfont\tiny ${d \over dt}$}}}{\Longrightarrow}}}
\newcommand\ttt{\mathrel{\overset{\makebox[0pt]{\mbox{\normalfont\tiny $\times \sign(y_2)$}}}{\Longrightarrow}}}
\newcommand\fff{\mathrel{\overset{\makebox[0pt]{\mbox{\normalfont\tiny $F(p)(\cdot)$}}}{\Longrightarrow}}}
\newcommand\hhh{\mathrel{\overset{\makebox[0pt]{\mbox{\normalfont\tiny $\times h_3$}}}{\Longrightarrow}}}
\begin{document}
	\title{Output Voltage Regulation of a Fuel Cell/Boost Converter System: A PI-PBC Approach}
	\author{
		Rafael Cisneros, Romeo Ortega, Carlo A. Beltrán, Diego Langarica-Córdoba, Luis. H. Díaz-Saldierna
		\thanks{R. Cisneros and R. Ortega are with 
			Departamento Académico de Sistemas Digitales, Instituto Tecnológico Autónomo de México (ITAM), 01080 Ciudad de México, Mexico. Emails: \{rcisneros,romeo.ortega\}@itam.mx.}
		\thanks{C.A. Beltrán and D. Langarica-Córdoba are with  Faculty of Sciences, Autonomous University of San Luis Potosi (UASLP), 78295 San Luis Potosi, Mexico. E-mails: \{carlo.beltran,diego.langarica\}@uaslp.mx. }
		\thanks{L. H. Díaz-Saldierna is with Institute for Scientific and Technological Research of San Luis Potosi (IPICYT), 78216 San Luis Potosi, Mexico. E-mail: ldiaz@ipicyt.mx.}
	}
	
	\maketitle

\begin{abstract}In this paper we consider the problem of voltage regulation of a proton exchange membrane fuel cell connected to an uncertain load through a boost converter. We show that, in spite of the inherent  nonlinearities in the current-voltage behaviour of the fuel cell, the voltage of a fuel cell/boost converter system can be regulated with a simple {\em proportional-integral} (PI) action designed following the {\em Passivity-based Control} (PBC) approach \cite{HERetal,ORTpidbook}. The system under consideration consists of a DC-DC converter interfacing a fuel cell with a resistive load. We show that the output voltage of the converter converges to its desired constant value for all the systems initial conditions---with convergence ensured {\em for all} positive values of the PI gains. This latter feature facilitates the, usually difficult, task of tuning the gains of the PI. An Immersion and Invariance parameter estimator  \cite{ASTKARORTbook} is afterwards proposed which allows the operation of the PI-PBC when the {\em load is unknown}, maintaining the output voltage at the desired level.  The stable operation of the overall system is proved and the approach is validated with extensive numerical simulations considering real-life scenarios, where a robust behavior in spite of load variations is obtained.
\end{abstract}

%\keywords{PI passivity-based control, Immersion and Invariance estimation, Fuel-cell system}

%
%%%%%%%%%%%%%
\section{Introduction}
\lab{sec1}
%%%%%%%%%%%%%
The proton exchange membrane fuel cell (PEMFC) plus DC/DC boost converter system plays a crucial role in the operation of many practical applications, including electric vehicle drive-train systems \cite{NAMbook} and DC microgrids \cite{SCHetal}. This configuration interconnects the PEMFC stack with the DC link voltage of the traction inverter or the microgrid. The PEMFC DC/DC converter enables the energy flow between these two electrical subnets over a wide voltage range, hence a critical task to be accomplished is to design a control strategy that ensures an effective {\em voltage regulation}. This task is complicated by several factors.
\begite
\item As is well-known \cite{KASSCHVERbook},  the dynamics of the boost converter is described by a bilinear system. Moreover, the zero dynamics with respect to the voltage output of this system is {\em unstable},  that is, the system is non-minimum phase.  

\item The relation between the current and the voltage of the PEMFC---the so-called polarization curve---is highly {\em nonlinear} and uncertain \cite{LARDICbook}.

\item The load that is connected to the converter, usually modeled by a simple resistance, is effectively time-varying and {\em uncertain}. Moreover, the parasitic resistance of the boost inductor is also poorly known and time-varying.
\endite 

In this paper we show that, in spite of all these complicated features, it is possible to regulate effectively the output voltage of the boost converter with a {\em simple PI-PBC}, that can be made adaptive incorporating an Immersion and Invariance (I\&I) parameter estimator  \cite{ASTKARORTbook}, in order to estimate the inductor parasitic resistance and the load resistance, and consequently the corresponding equilibrium point. The PI-PBC used here is based on the PI-PBC first reported in \cite{HERetal} and has been widely accepted by the power electronics and automatic control communities, being cited more than 160 times in Google Scholar---see \cite{ORTpidbook,ZONetalijrnlc22,WeiHe2022}  for a recent list of references to this work. A fundamental property of the PI-PBC is that convergence is ensured {\em for all} positive values of the PI gains, facilitating the, usually difficult, task of tuning the gains of the PI. To take care of the fact that the boost converter is fed by a PEMFC, and not a constant DC source as it is in  \cite{HERetal}, it is necessary to modify the design incorporating the PEMFC polarization curve characteristic. It is shown in the paper that it is possible to exploit the monotonic nature of this curve to prove the stability of the new PI-PBC design even including the I\&I parameter estimator.

As stated before, due to the non-minimum phase behavior of the boost converter, direct voltage-mode control (VMC) is not suitable, instead current-mode control (CMC) is preferred to achieve voltage regulation indirectly, through inductor current regulation. For instance, a CMC strategy for the PEMFC plus boost converter system is detailed in \cite{Diaz2017}, where  two {\em linear control} loops are implemented: an outer PI voltage control loop for current reference generation and a current loop for duty cycle generation, consisting of a high gain compensator and a low-pass filter. This last technique is extended in \cite{Diaz2021}, when a quadratic boost converter is connected to a PEMFC instead of boost converter in order to achieve higher output voltages. Furthermore, in \cite{Sira2012} a linear state-feedback controller based on integral passive output feedback is designed for the PEMFC plus boost converter system, which semi-globally stabilizes the output voltage regulation error to zero. On the other hand, {\em nonlinear controllers} are reported in the literature for the system under study. For example,  an adaptive backstepping control for a fuel cell/boost converter system is designed in \cite{Yuz2018}. The overall scheme consist of two control loops: an inner current loop based on dynamic backstepping approach plus an I\&I estimator for load approximation and an outer voltage loop based on a PI action over the output voltage. Additionally, a CMC combined with PBC scheme for output voltage regulation in a fuel cell-boost converter system is proposed in \cite{CarloBeltran2023}. In this case, the current loop is designed with PBC to take advantage of the passive map of the control signal to the inductor current, meanwhile the voltage loop is designed to generate the desired current reference with a PI action over the voltage error.  In both previous cases \cite{Yuz2018,CarloBeltran2023}, due to the merging of the two control loops through the time-derivative of the current reference, these approaches result in an intricate control laws where the proportional gain of the voltage loop is bounded by system parameters in order to assure a well-defined control law. Moreover, the aforementioned linear and non-linear approaches use two simplified fuel cell polarization curve models to represent the PEMFC static current-voltage characteristics \cite{CarloBeltran2023}. In particular the rational model proposed in \cite{Shahin2010} and the two-termed power function proposed in \cite{Yuz2018IECON}, which are obtained from experimental data and requires the knowledge of three parameters to be implemented. 

Different from the control design approaches above, in this paper, we propose a voltage regulation scheme
based on a simple PI-PBC scheme for the fuel cell-boost converter system.  We show that the output voltage of the converter converges to its desired constant value for all initial conditions---with convergence ensured {\em for all} positive values of the PI gains. This feature facilitates the, usually difficult, task of tuning the gains of the PI. Additionally, an Immersion and Invariance parameter estimator  \cite{ASTKARORTbook} is afterwards proposed which allows the operation of the PI-PBC when the {\em inductor parasitic resistance and load are unknown}, maintaining the output voltage at the desired level. Furthermore, the well-known Larminie-Dicks model in \cite{LARDICbook} is used to represent the PEMFC static current-voltage characteristics. This model takes into account four different internal  FC voltage phenomena, i.e., the open-circuit voltage, the activation voltage loss, the ohmic voltage loss and the voltage concentrations loss. The following are the main contributions of this research regarding to the output voltage regulation of the PEMFC-boost converter system: 

\begin{itemize}
	\item The design of a simple and easy-to-tune PI-PBC scheme for output voltage regulation of a fuel cell system.
	\item Different from other approaches, the proposed scheme implements only one control loop, which simplifies its practical implementation.
	\item The controller is well-defined, and the control gains can be selected freely, with overall stability guaranteed. 
	\item The proposed control scheme uses the Larminie-Dicks model to describe the current-voltage polarization curve of the PEMFC. To the best of the authors knowledge, this expression has not been used for control purposes in the literature before. 
\end{itemize}

The remainder of the paper is organized as follows. The description of the PEMFC plus DC/DC boost converter system is given in Section \ref{sec2}. In Section \ref{sec3}, we present the proposed PI-PBC, for the case of known parameters. Afterwards, section \ref{sec4} presents its adaptive version.  Simulation results, which illustrate the closed-loop performance of the proposed adaptive PI-PBC, are presented in Section \ref{sec5}. Finally, the paper is wrapped-up with concluding remarks and future work in  Section \ref{sec6}.

%%%%%%%%%%%%%
\section{System description}
\lab{sec2}
%%%%%%%%%%%%%
A schematic diagram of the PEMFC+boost converter system considered  in this work is given in Fig \ref{fig1}.  The PEMFC current and voltage are $i_{fc}$ and $v_{fc}$, respectively. The converter inductor current and output voltage are $i_L$ and $v_o$, respectively. Additionally, the system components are the fuel cell coupling capacitor $C_{fc}$, the inductor $L$ and its parasitic resistance $R_P$, the output capacitor $C$ and the load $R_L$, considered purely resistive. Finally, the voltage boosting action ($v_o > v_{fc}$) is achieved through the duty cycle (control signal) $D$, which is used with a pulse width modulation to generate a trigger signal to drive the switches of the converter. 

In this section we give its mathematical description assuming that the behavior of the PEMFC is described by its polarization curve---an assumption that is reasonable given the big differences in time scales between the PEMFC and the boost converter. It is, furthermore, assumed that the boost load can be described by a simple linear resistor, which is however uncertain and time-varying---hence it is necessary to estimate its value on-line.

\begin{figure*}[t!]
	\centering
	\includegraphics{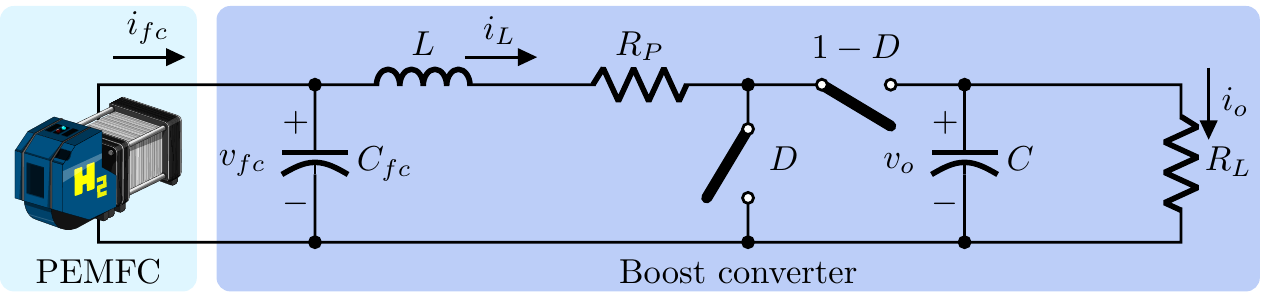}
	\caption{The PEMFC boost converter system}
	\lab{fig1}
\end{figure*}
\begin{figure}[t!]
	\centering
	\begin{circuitikz}[american voltages,scale=0.8]
		\draw
		(0,3) to [V, l_=$c_1$] (0,0)
		to [short,-*] (7,0)
		(0,3) to [R, l_=$c_2$, v^>=$\hspace{0.2cm}$] (2,3)
		to [short, i_=$i_{fc}$] (3,3)
		(6,3) to [cV, l=$c_3\ln i_{fc}+c_5e^{c_4i_{fc}}$] (3,3)
		(6,3) to [short,-*] (7,3)
		(7,0) to [open, v_<=$v_{fc}$] (7,3); 
	\end{circuitikz}
	\caption{A circuit diagram representation of \eqref{Vfc}.}
	\label{pemfc_circ}
\end{figure}
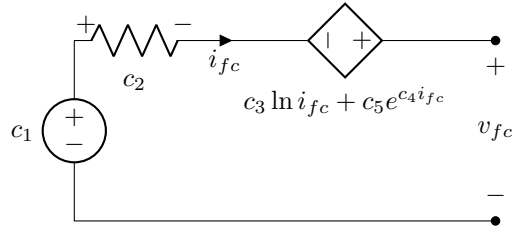

\subsection{The PEMFC}
\lab{subsec21}
%%%%%%%%
The polarization curve of the PEMFC, which describes the relation between its current $i_{fc}$ and its voltage $v_{fc}$, is given by the function \cite[eq. 3.14]{LARDICbook}
\begin{equation}\label{Vfc}
	v_{fc}=V_{fc}(i_{fc}):=c_1 - c_2 \ln (i_{fc}) - c_3i_{fc} - c_5e^{c_4i_{fc}}
\end{equation}
where  $v_{fc}>0,i_{fc}>0$ and all the parameters satisfy  $c_i\geq0,\;i=1,\dots,5$. A circuit representation of this model is given in Fig. \ref{pemfc_circ}. Moreover, a graph of this function is shown in Fig. \ref{vfc_ifc} found in Section \ref{sec4}. 

The lemma below reveals a monotonicity property of the curve, which is essential for the stability proof of the PI-PBC.

\begin{lemma}\em
	\lab{lem1}
	The function $V_{fc}(i_{fc})$ given in \eqref{Vfc} is {\em monotonically decreasing} in the positive real axis. That is, for all $a>0,b>0$, it satisfies
	\begin{align}
		\lab{xi}
		(a - b)[V_{fc}(a)-V_{fc}(b)] \leq 0.
	\end{align}
\end{lemma}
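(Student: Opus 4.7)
The plan is to reduce the monotonicity inequality \eqref{xi} to the non-positivity of the derivative $V_{fc}'$ on the positive axis, which in turn follows termwise from the sign assumptions $c_i \ge 0$.

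First I would differentiate \eqref{Vfc} directly, obtaining
\begin{equation*}
V_{fc}'(i_{fc}) \;=\; -\frac{c_2}{i_{fc}} \;-\; c_3 \;-\; c_4 c_5 \, e^{c_4 i_{fc}}.
\end{equation*}
Since $i_{fc}>0$ and $c_i\ge 0$ for $i=1,\dots,5$, each of the three summands on the right is non-positive, and therefore $V_{fc}'(i_{fc})\le 0$ for all $i_{fc}>0$. Note that the hypothesis $c_i\ge 0$ is used here in an essential way: if any of $c_2, c_3, c_4, c_5$ were allowed to be negative, the corresponding term would change sign and the conclusion would fail.

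Next I would promote this pointwise statement to the integrated inequality \eqref{xi}. The cleanest route is the mean value theorem: for $a\ne b$ with $a,b>0$, there exists $\xi$ strictly between $a$ and $b$ (hence $\xi>0$) such that $V_{fc}(a)-V_{fc}(b) = V_{fc}'(\xi)(a-b)$. Multiplying by $(a-b)$ gives
\begin{equation*}
(a-b)\bigl[V_{fc}(a)-V_{fc}(b)\bigr] \;=\; V_{fc}'(\xi)\,(a-b)^2 \;\le\; 0,
\end{equation*}
using the sign of $V_{fc}'$ established above. The case $a=b$ is trivial. An equally short alternative is to write $V_{fc}(a)-V_{fc}(b) = \int_b^a V_{fc}'(s)\,ds$ and bound the integrand.

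There is no real obstacle in this proof; the only thing to be mildly careful about is the domain, since the $\ln$ term makes $V_{fc}$ undefined at $i_{fc}=0$. This is automatically handled by restricting to $a,b>0$, which is exactly the hypothesis of the lemma, and by observing that any $\xi$ between two positive numbers is itself positive, so $V_{fc}'(\xi)$ is well defined at the mean-value point.
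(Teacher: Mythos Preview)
Your proof is correct and follows essentially the same approach as the paper: both compute $V_{fc}'(i_{fc}) = -c_2/i_{fc} - c_3 - c_4 c_5 e^{c_4 i_{fc}}$ and observe it is non-positive on $(0,\infty)$. You are slightly more explicit in passing from $V_{fc}'\le 0$ to the inequality \eqref{xi} via the mean value theorem, whereas the paper leaves that standard step implicit.
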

\begin{proof}
	The proof is established proving that   $V_{fc}'(i_{fc}) \leq 0$, which can be computed as
	$$
	V_{fc}'(i_{fc}) =-\frac{c_2}{i_{fc}}- c_3  - c_5c_4e^{c_4i_{fc}}
	$$
	which is clearly non-positive.
\end{proof}
\subsection{The overall system: PEMFC+boost converter and assignable equilibrium set}
\lab{subsec22}
%%%%%%%%%%%%
%
The PEMFC+boost converter system is modeled  by the following set of equations
\begin{subequations}\label{sys1}
	\begin{align}
		C_{fc}\dot v_{fc}= &i_{fc} - i_L \\
		L \dot i_{L} = & v_{fc} -R_L i_{L} - (1-D ) v_{o}\\
		C\dot v_o =& -\frac{1}{R_L}v_o + (1-D)i_{L}& 
	\end{align}
\end{subequations}
where all signal are identified in Fig \ref{fig1} and $D \in (0,1)$  is the converter's duty ratio. The equations \eqref{sys1} can be expressed in a compact form as introduced in the following fact, whose proof follows trivially from the equations themselves.

\begin{fact}\em
	\lab{fac1}
	The PEMFC+boost converter system \eqref{sys1} admits  the following representation 
	\begequ
	\lab{sys2}
	\mathcal{Q} \dot x= (\calj_0+\calj_1 u-\calr)x+d(x_1)
	\endequ
	with the definitions
	$$
	x:=\begmat{v_{fc} \\ i_L \\ v_o},\;\mathcal{Q}=\begin{bmatrix}C_{fc}&0&0\\0&L&0\\0&0&C\end{bmatrix},\;\calj_0:=\begmat{0&-1&0\\1& 0 &0 \\ 0&0&0},\;\calj_1:=\begmat{0&0&0\\0&0&-1 \\ 0&1&0},\;\calr:=\begmat{0&0&0\\0&\theta_1&0 \\ 0&0&\theta_2},
	$$
	$$
	\;d(x_1):=\begmat{ d_1(x_1) \\ 0 \\ 0},\;\theta=\begin{bmatrix}R_p\\\frac{1}{R_L}\end{bmatrix},\; u:=1-D,
	$$
	and 
	\begin{equation}\label{ifc}
		d_1(x_1):=V^{I}_{fc}(x_1)=i_{fc},
	\end{equation}
	where $V^{I}_{fc}(x_1)$ denotes the left inverse of the mapping $V_{fc}(x_1)$, that is $V^{I}_{fc}(V_{fc}(x_1))= x_1$.\footnote{This mapping is well defined because of the strict monotonicity of $V_{fc}(x_1)$.} 
\end{fact}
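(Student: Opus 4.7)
My plan is to verify the compact representation \eqref{sys2} by a direct component-wise expansion of its right-hand side and match the result to the three scalar equations of \eqref{sys1}; the identity is purely algebraic and the statement really is a change of notation rather than a theorem.

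Since $\mathcal{Q}$ is diagonal, $\mathcal{Q}\dot x$ immediately yields $(C_{fc}\dot v_{fc},\,L\dot i_L,\,C\dot v_o)^{\top}$. Expanding the right-hand side, I would compute the three matrix-vector products separately: $\calj_0 x = (-i_L,\,v_{fc},\,0)^{\top}$, $\calj_1 u\,x = (0,\,-u v_o,\,u i_L)^{\top}$, and $\calr x = (0,\,\theta_1 i_L,\,\theta_2 v_o)^{\top}$. Adding the drift $d(x_1)=(i_{fc},0,0)^{\top}$ and substituting $u=1-D$, $\theta_1=R_p$, $\theta_2=1/R_L$, the three rows collapse to $i_{fc}-i_L$, $v_{fc}-R_p i_L-(1-D)v_o$, and $(1-D)i_L - v_o/R_L$, which coincide with the right-hand sides of \eqref{sys1} (identifying the series resistance appearing in the inductor equation with the inductor parasitic resistance $R_p$). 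This establishes the identity \eqref{sys2}.

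The only point requiring a small argument is the well-posedness of the drift component $d_1(x_1)=V^{I}_{fc}(x_1)$ in \eqref{ifc}. For this I would invoke Lemma \ref{lem1}: the derivative computed in its proof,
\[
V'_{fc}(i_{fc}) = -\frac{c_2}{i_{fc}} - c_3 - c_4 c_5 e^{c_4 i_{fc}},
\]
is in fact strictly negative on $\{i_{fc}>0\}$ whenever any of the coefficients $c_2$, $c_3$ or $c_4 c_5$ is strictly positive, which covers the physically relevant case of the Larminie-Dicks model. Consequently $V_{fc}$ is strictly decreasing and therefore injective on $\mathbb{R}_{>0}$, the left inverse $V^{I}_{fc}$ is single-valued, and the defining identity $v_{fc}=V_{fc}(i_{fc})$ inverts unambiguously into $i_{fc}=V^{I}_{fc}(v_{fc})=d_1(x_1)$, consistent with Fig. \ref{pemfc_circ}.

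I do not foresee any genuine obstacle here; the paper itself remarks that the proof is trivial, and the verification is essentially bookkeeping. The most delicate aspect is the invertibility argument for $V_{fc}$, but since the computation in Lemma \ref{lem1} already supplies strict negativity of $V'_{fc}$ under mild nondegeneracy of the polarization-curve parameters, even this step is routine.
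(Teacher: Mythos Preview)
Your proposal is correct and follows exactly the approach the paper indicates: a direct component-wise expansion of \eqref{sys2} matched against \eqref{sys1}, with the well-definedness of $d_1(x_1)=V^{I}_{fc}(x_1)$ justified via the strict monotonicity established in Lemma~\ref{lem1}. The paper does not spell this out beyond the remark that the proof ``follows trivially from the equations themselves,'' so your verification is precisely the bookkeeping the authors had in mind; your parenthetical identification of the series resistance in the inductor equation with $R_p$ rather than $R_L$ is also the right reading of an apparent typo in \eqref{sys1}.
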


A first step in the design of a controller whose objective is to drive the system towards a desired constant equilibrium is to characterize the set of assignable equilibria which are compatible with the control objective---in this case driving the output voltage $x_3$ towards a desired value, denoted $x_3^\star >0$. This task is carried out in the lemma below.    

\begin{lemma}\em
	\lab{lem2}
	Fix a desired value for the output  voltage $x_3=x_3^\star  >0$. The set of assignable equilibrium points of the system \eqref{sys2} compatible with the control objective is defined as 
	\begin{equation}\label{eq}
		\cale:=\left\{x\in\mathbb{R}^3 \;|\;p(x_1,x_3,\theta)=0,\;x_2=d_1(x_1) ,\;x_3=x_3^\star \right\}
	\end{equation}
	where 
	$$
	p(x_1,x_3,\theta):=d_1(x_1)x_1-\theta_2x_{3}^2-\theta_1 d^2_1(x_1).
	$$
\end{lemma}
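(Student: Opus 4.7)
The plan is straightforward: characterize $\cale$ by setting $\dot x = 0$ in \eqref{sys2}, unpacking the three scalar equilibrium equations, and eliminating the control input $u$.

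First I would write out \eqref{sys2} row-by-row at equilibrium, using the explicit forms of $\calj_0,\calj_1,\calr$ and $d(x_1)$ from Fact \ref{fac1}. The first row immediately yields
$$
0 = -x_2 + d_1(x_1) \quad\Longleftrightarrow\quad x_2 = d_1(x_1),
$$
which is the second defining relation in \eqref{eq}. The third row gives
$$
0 = u x_2 - \theta_2 x_3,
$$
so, invoking $x_3 = x_3^\star$ and using that $d_1(x_1) = V_{fc}^{I}(x_1)>0$ on the physically meaningful domain $x_1>0$ (so that $x_2 \neq 0$), the equilibrium value of the control is forced to be $u^\star = \theta_2 x_3^\star / d_1(x_1)$.

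Next I would substitute this $u^\star$ and $x_2 = d_1(x_1)$, $x_3 = x_3^\star$ into the second row,
$$
0 = x_1 - \theta_1 x_2 - u^\star x_3^\star,
$$
obtaining
$$
0 = x_1 - \theta_1 d_1(x_1) - \frac{\theta_2 (x_3^\star)^2}{d_1(x_1)}.
$$
Multiplying through by $d_1(x_1)>0$ (which preserves the equation) yields exactly
$$
d_1(x_1) x_1 - \theta_1 d_1^{2}(x_1) - \theta_2 (x_3^\star)^2 = 0,
$$
i.e.\ $p(x_1,x_3^\star,\theta)=0$. Conversely, given any $x_1>0$ satisfying $p(x_1,x_3^\star,\theta)=0$, defining $x_2 := d_1(x_1)$, $x_3 := x_3^\star$ and $u^\star := \theta_2 x_3^\star/d_1(x_1)$ makes all three equilibrium equations hold simultaneously, so the triple lies in the assignable equilibrium set.

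The only subtle point is ensuring that the elimination step is legitimate, which hinges on $d_1(x_1)\neq 0$; this is guaranteed by the strict monotonicity of $V_{fc}$ (Lemma \ref{lem1}) together with the positivity constraints $v_{fc}>0,\,i_{fc}>0$ built into the polarization curve \eqref{Vfc}. Everything else is algebraic manipulation, so I do not anticipate any real obstacle.
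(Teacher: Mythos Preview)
Your argument is correct: you write out the three scalar equilibrium equations, solve the third for $u^\star$, substitute into the second, and clear the denominator. The only technical requirement is $x_2=d_1(x_1)\neq 0$, which you justify properly.

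The paper takes a slightly different route. It writes the dynamics in affine form $\dot x=f(x)+g(x)u$ and eliminates $u$ in one shot by premultiplying $f(x)$ with a full-rank left annihilator
\[
g^\perp(x)=\begin{bmatrix}1&0&0\\ x_1&x_2&x_3\end{bmatrix}
\]
of $g(x)$, then setting $g^\perp(x)f(x)=0$ and $x_3=x_3^\star$. The first row of this product reproduces $x_2=d_1(x_1)$ and the second row gives $p(x_1,x_3,\theta)=0$ directly, without ever dividing by $d_1(x_1)$. Your solve-and-substitute is more elementary and arguably more transparent for this three-state example; the annihilator method is the systematic tool that scales to larger systems and sidesteps the need to single out which equation to solve for $u$ and to justify nonvanishing denominators. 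Both lead to the same set $\cale$.
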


\begin{proof}
	To identify this set we first eliminate the control signal from the system dynamics. Towards this end, notice that  \eqref{sys2} may be written in the affine form $\dot x=f(x)+g(x)u$ with
	$$
	f(x):=(\calj_0-\calr)x+d(x_1)=\begmat{-x_2 +d_1(x_1) \\ x_1-\theta_1 x_2 \\ -\theta_2 x_3}, \;g(x):=\calj_1 ux=\begmat{0 \\ -x_3 \\ x_2}.
	$$
	Consequently, to eliminate the control we multiply $f(x)$ by a full-rank, left-annihilator of $g(x)$, namely the  matrix
	$$
	g^\perp(x):=\begmat{1 & 0 & 0\\ x_1 & x_2 & x_3}.
	$$
	The set \eqref{eq} is obtained setting $x_3= x_3^\star $ and evaluating 
	$$
	g^\perp(x) f(x)=0.
	$$
\end{proof}

\begin{remark}
\lab{rem1}
We make the important observation  that the assignable equilibrium points are {\em independent} of the boost converter parameters $C_{fc},L$ and $C$. They depend only on the parameters of the PEMFC polarization curve \eqref{Vfc}---which can be experimentally obtained off-line---and the resistance parameters $\theta$. An element of the set $\cale$, for the case when this parameters  are {\em known} will be denoted  via the constant vector  $x^\star \in  \rea^3_+$. The error between the state of the system and an equilibrium will be denoted as
$$
\tilde x:=x-x^\star.
$$ 
In Section \ref{sec4} we propose to estimate these parameters on-line, for this {\em adaptive} case, we will denote the elements of the set $\cale$ as $\hat x^\star$. The error between the actual equilibrium value $x^\star$ and its estimate will be denoted 
\begequ
\lab{e}
e:=\hat x^\star-x^\star.
\endequ 
\end{remark}
%%%%%%%%%%%%%%%%
\section{Proposed PI-PBC: Known Parameter Case}
\lab{sec3}
%%%%%%%%%%%%%
%
The proposition below defines the proposed PI-PBC for the case of known parameters $\theta$. The adaptive version, where these parameters are estimated on-line, is given in Section \ref{sec4}.

\begin{proposition}\em
	\lab{pro1}
	Consider the model of the PEMFC+boost converter system \eqref{sys2}. Assume the state $x$ is {\em measurable} and all the system parameters are {\em known}. Fix a desired, constant value for $x_3$ as $x_3^\star >0$, and compute from \eqref{eq} the associated equilibrium vector $x^\star$. Assume $x_2^\star$ is non-negative. Consider the PI-PBC  \cite[Proposition 2]{HERetal}\footnote{See also \cite[Proposition 4.7]{ORTpidbook}.}
	\begsubequ
	\lab{pi}
	\begali{
		\lab{dotxc}
		{dx_c \over dt} & =y_N(x)\\
		\lab{u}
		u &= - K_P y_N(x) - K_I x_c,
	}
	\endsubequ
	where the input signal to the PI is defined via the function
	\begequ
	\lab{yn}
	y_N(x)= x^\star _2 x_3 -x^\star _3 x_2.
	\endequ
	For all {\em arbitrary} positive constants $K_P$ and $K_I$ we have that all signals remain bounded and
	\begequ
	\lab{limtilx3}
	\liminf {\begin{bmatrix}x(t)\\x_c(t)\end{bmatrix}}  =\begin{bmatrix}x^\star\\x_c^\star\end{bmatrix},
	\endequ
	where the equilibrium point for the controller state is given as
	\begequ
	\lab{xcsta}
	x_c^\star=-K_I^{-1}u^\star,
	\endequ 
	with $u^\star$ the univocally defined constant control associated to the equilibrium  $x^\star$.\footnote{As it is shown in \cite[Proposition B.1]{ORTpidbook} we have that $u^\star=-[g^\top(x^\star) g(x^\star)]^{-1}g^\top(x^\star)f(x^\star)$.}
\end{proposition}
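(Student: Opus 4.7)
The plan is to apply the standard shifted-passivity argument that underlies PI-PBC. First, I would introduce the error coordinates $\tilde x := x - x^\star$ and $\tilde x_c := x_c - x_c^\star$ and subtract the equilibrium identity $(\calj_0 + \calj_1 u^\star - \calr)x^\star + d(x_1^\star)=0$ from \eqref{sys2}. Exploiting $ux - u^\star x^\star = u\tilde x + (u-u^\star)x^\star$, the error dynamics take the form
$$
\mathcal{Q}\dot{\tilde x} = (\calj_0 + \calj_1 u - \calr)\tilde x + \calj_1(u-u^\star)x^\star + \big[d(x_1) - d(x_1^\star)\big].
$$

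Next, I would propose the composite Lyapunov function
$$
W(\tilde x,\tilde x_c) := \tfrac{1}{2}\tilde x^\top \mathcal{Q}\tilde x + \tfrac{K_I}{2}\tilde x_c^2,
$$
which is positive definite and radially unbounded since $\mathcal{Q} \succ 0$ and $K_I > 0$. Differentiating along the closed loop, the skew-symmetry of $\calj_0$ and $\calj_1$ annihilates the quadratic $u$-dependent term; the dissipation from $\calr$ yields $-\theta_1\tilde x_2^2 - \theta_2\tilde x_3^2$; the direct computation $\tilde x^\top \calj_1 x^\star = x_2^\star x_3 - x_3^\star x_2 = y_N(x)$ turns the cross-coupling into $(u - u^\star)y_N(x)$; and the polarization source contributes $\tilde x_1[d_1(x_1) - d_1(x_1^\star)] \leq 0$, because the left inverse $d_1 = V^I_{fc}$ inherits the monotonicity of $V_{fc}$ established in Lemma \ref{lem1}. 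Substituting $u = -K_P y_N - K_I x_c$ together with $u^\star = -K_I x_c^\star$, so that $u - u^\star = -K_P y_N - K_I \tilde x_c$, the term $K_I \tilde x_c y_N$ arising from $\dot{\tilde x}_c = y_N(x)$ exactly cancels the integral-gain cross term, leaving
$$
\dot W = -\theta_1 \tilde x_2^2 - \theta_2 \tilde x_3^2 + \tilde x_1\big[d_1(x_1) - d_1(x_1^\star)\big] - K_P y_N^2(x) \leq 0,
$$
which establishes boundedness of every closed-loop signal for arbitrary $K_P, K_I > 0$.

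To conclude the limit \eqref{limtilx3} I would then invoke LaSalle's invariance principle on compact sublevel sets of $W$. On the set $\{\dot W = 0\}$ the three non-positive terms must vanish simultaneously, giving $\tilde x_2 = \tilde x_3 = 0$, $y_N(x) = 0$, and---by \emph{strict} monotonicity of the polarization curve, which holds whenever any of $c_2,\,c_3,\,c_4 c_5$ is strictly positive, as is invariably the case physically---$\tilde x_1 = 0$. Re-substituting $x = x^\star$ back into the error dynamics forces $\calj_1(u - u^\star)x^\star = 0$, and since $\calj_1 x^\star = \col(0,\,-x_3^\star,\,x_2^\star) \neq 0$ for $x_3^\star > 0$, one obtains $u = u^\star$, which together with $y_N = 0$ and \eqref{u} yields $\tilde x_c = 0$. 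Hence the largest invariant set is $\{(x^\star, x_c^\star)\}$.

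The main obstacle is precisely this LaSalle step: it hinges on upgrading the monotonicity of Lemma \ref{lem1} to its strict version, so that the source term actually constrains $\tilde x_1$, and on the geometric fact that $\calj_1 x^\star \neq 0$, which is what makes the feedforward control $u^\star$ uniquely identifiable from the equilibrium condition. Everything else is a standard PI-PBC energy-balance manipulation that exploits the interconnection-and-damping structure of the boost converter, augmented with the additional dissipative structure supplied by the monotone polarization curve.
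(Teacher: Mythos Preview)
Your argument is correct and tracks the paper almost verbatim through the error dynamics, the Lyapunov candidate $\tfrac12\tilde x^\top\mathcal{Q}\tilde x+\tfrac{K_I}{2}\tilde x_c^2$, and the use of Lemma~\ref{lem1} to render the polarization-curve cross term non-positive. The only substantive difference is the invariance step: you read off $\tilde x_1=0$ directly from the vanishing of the monotone source term on $\{\dot W=0\}$, which requires upgrading Lemma~\ref{lem1} to \emph{strict} monotonicity (a point you correctly flag), and then close the loop on $\tilde x_c$ via $\calj_1 x^\star\neq 0$. The paper instead discards the source term after bounding $\dot V\le -\theta_1\tilde x_2^2-\theta_2\tilde x_3^2$, invokes LaSalle--Yoshizawa to get $\tilde x_2,\tilde x_3\to 0$, and then runs a separate \emph{detectability} argument: substituting $x_2\equiv x_2^\star,\;x_3\equiv x_3^\star$ into \eqref{sys2} and algebraically eliminating $u$ recovers the equilibrium relation $p(\xi,x_3^\star,\theta)=0$, whence $\xi=x_1^\star$. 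Your route is shorter and has the bonus of explicitly settling $\tilde x_c$; the paper's route avoids strengthening Lemma~\ref{lem1} but tacitly relies on uniqueness of the root of $p(\cdot,x_3^\star,\theta)$, which ultimately rests on the same strictness.
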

\begin{proof}
	The proof follows modifying the proof of  \cite[Proposition 2]{HERetal}---to include the presence of the term $d(x_1)$---therefore, is only sketched here. 
	
	From \eqref{sys2} we obtain the error dynamics
	\begin{equation}
		\begin{aligned}\label{xtilde}
			\mathcal{Q}	\dot {\tilde x}  & = [\calj_0+\calj_1(\tilde u+u^\star) -\calr](\tilde x +x^\star)+d(x_1) \pm d(x_1^\star)\\
			& = (\calj_0+\calj_1 u^\star-\calr) x^\star + d(x_1^\star)+(\calj_0+\calj_1 u-\calr)\tilde x +\tilde u \calj_1 x^\star  +d(x_1)- d(x_1^\star)\\
			& =(\calj_0+\calj_1 u-\calr)\tilde x +\tilde u \calj_1 x^\star  +d(x_1)- d(x_1^\star)\\
		\end{aligned}
	\end{equation}
	where we used the equilibrium equation
	$$
	(\calj_0+\calj_1 u^\star-\calr) x^\star + d(x_1^\star)=0,
	$$
	to get the third identity. Now, we notice from \eqref{yn} that the (passive) output $y_N(x)$ may be written as
	\begequ
	\lab{ynx}
	y_N(x)=x^\top \calj_1 x^\star 
	\endequ
	and moreover that $y_N(x_\star)=0$, hence
	$$
	y_N(x)=y_N(\tilde x)=\tilde x^\top \calj_1x^\star.
	$$
	
	Consider the Lyapunov function candidate
	$$
	V(\tilde x,\tilde x_c):=\hal \tilde x^\top \mathcal{Q}\tilde x+{K_I \over 2}  \tilde x_c^2.
	$$
	Its time derivative satisfies
	\begalis{
		\dot V &=- \tilde x^\top \calr \tilde x + \tilde u y_N(\tilde x)  + K_I \tilde x_c y_N(\tilde x)+(x_1-x_1^\star)[d_1(x_1)- d_1(x_1^\star)]\\
		&=- \tilde x^\top \calr \tilde x + \tilde u y_N(\tilde x)  + K_I \tilde x_c y_N(\tilde x)+[V_{fc}(i_{fc})-V_{fc}(i^\star_{fc})](i_{fc} - i_{fc}^\star)\\
		&\leq - \tilde x^\top \calr \tilde x + \tilde u y_N(\tilde x)  + K_I \tilde x_c y_N(\tilde x)\\
		&=- \tilde x^\top \calr \tilde x + (u-K_Ix_c^\star) y_N(\tilde x)  + K_I \tilde x_c y_N(\tilde x)\\
		&=- \tilde x^\top \calr \tilde x + \left[ - K_P y_N(x) - K_I x_c-K_Ix_c^\star\right] y_N(\tilde x)  + K_I \tilde x_c y_N(\tilde x)\\
		&= -\tilde x^\top  \calr  \tilde x - K_P y^2_N(\tilde x)\\
		& \leq -\tilde x^\top  \calr  \tilde x\\
		&=-\theta_1 \tilde x_2^2-\theta_2 \tilde x_3^2,
	}
	where, using \eqref{Vfc} and \eqref{ifc}, in the second identity we replaced
	\begalis{
		(x_1-x_1^\star)[d_1(x_1)- d_1(x_1^\star)] &=(v_{fc}-v_{fc}^\star)(i_{fc} -i_{fc}^\star)\\
		&=[V_{fc}(i_{fc})-V_{fc}(i^\star_{fc})](i_{fc} - i_{fc}^\star),
	}
	that, in view of {\bf Lemma \ref{lem1}}, is a non positive term, yielding the first inequality and, for the third identity we invoked \eqref{xcsta}. From the inequality above and LaSalle-Yoshizawa's \cite[Theorem B.1]{ORTpidbook} we conclude that 
	\begin{align}
		\lim_{t\to\infty}\begin{bmatrix}
			x_2(t)\\x_3(t)
		\end{bmatrix}=\begin{bmatrix}
			x_{2}^\star\\x_3^\star
		\end{bmatrix}.
	\end{align}
	
	To complete the convergence proof we will show now that the vector $\col(x_2,x_3)$ is a {\em detectable} signal for the system dynamics. That is, that the following implication is true
	$$
	\begin{bmatrix}
		x_2(t)\\x_3(t)
	\end{bmatrix} \equiv \begin{bmatrix}
		x_{2}^\star\\x_3^\star
	\end{bmatrix}\quad \Rightarrow\quad \lim_{t\to\infty} x_1(t)=x_{1}^\star.
	$$
	Afterwards, we replace the left hand side equality in \eqref{sys2} to get the equations
	\begin{subequations}
		\begin{align}
			C_{fc}\dot \xi=&d_1(\xi)-x_2^\star\label{equ1}\\
			0=&\xi-\theta_1x_2^\star - x_3^\star u\label{equ2}\\
			0=&-\theta_2 x_3^\star + x_2^\star u,\label{equ3}
		\end{align}
	\end{subequations}
	where, to avoid confusion, we used the symbol $\xi$ to denote the behavior of $x_1$ in the projected dynamics. Multiplying \eqref{equ2}  by $x_2^\star$ and \eqref{equ3} by $x_3^\star$ and then summing both equations, yields
	\begin{align}\label{equ4}
		0=\xi x_2^\star -\theta_1(x_2^\star)^2 - \theta_2(x_3^\star)^2. 
	\end{align}
	From which we conclude that $\xi$ is constant. Therefore, from \eqref{equ1}, we get $x_2^\star=d_1(\xi)$. Substituting the later into \eqref{equ4} yields the equilibrium equation $p(\xi,x_3^\star,\theta)=0$ of \eqref{eq}.  Thus,
	$$
	\lim_{t\to\infty}x_1(t)=x_1^\star.
	$$
\end{proof}

\begin{remark}\em
\lab{rem2}
The proof of stability of the PI-PBC given in \cite[Proposition B.1]{ORTpidbook} and \cite[Proposition 2]{HERetal} is established showing that the signal $y_N(x)$ defined in \eqref{yn} is a {\em passive output}, hence the qualifier PBC. This together with the fact that a PI controller defines a strictly passive operator  \cite[Lemma 2.1]{ORTpidbook} for all positive PI tuning gains proves, via the Passivity Theorem and some signal chasing, that the zero equilibrium of the closed-loop system is globally asymptotically stable. 
\end{remark}

\begin{remark}\em
\lab{rem3}
Some robustness properties of the proposed PI-PBC---particularly for the case when the equilibrium point is not exactly known---may be found in \cite{ZONetalijrnlc22}.
\end{remark}

%%%%%%%%%%%%
\section{Estimation of $R_p$ and $R_L$ and Adaptive PI-PBC}
\lab{sec4}
%%%%%%%%%%%%
%
In this section we present the adaptive version of the PI-PBC of Proposition \ref{pro1} and prove that this version is also stabilizing---alas, only {\em practically}. 

\subsection{Proposed estimator}
\lab{subsec41}
%%%%%%%%%%%%
\begin{proposition}\em
	\lab{pro2}
	Consider the model of the PEMFC+boost converter system \eqref{sys2}. Assume the state $x$ is {\em measurable} and define the {\em unknown} parameter vector 
	$\theta$. Define the I\&I parameter estimator \cite{ASTKARORTbook}
	\begin{subequations}\label{estim}
		\begin{align}
			\dot z_1&=k_1x_2\left(x_1 - z_1x_2 +{k_1 \over 2}  L x^3_2  - x_3 u\right)\label{estim1}\\
			\dot z_2&=k_2x_3\left(x_2 u -  z_2x_3 +{k_2 \over 2} C x^3_3\right)\label{estim2}\\
			\hat\theta_1&=z_1 -{k_1 \over 2} Lx^2_2\label{estim3}\\
			\hat\theta_2&=z_2 -{k_2 \over 2} Cx^2_3,\label{estim4}
		\end{align}
	\end{subequations}
	with $k_1>0$ and $k_2>0$. Assume $x_2(t) \notin \call_2$ and $x_3(t) \notin \call_2$. Then, the parameter estimation error vector $\tilde \theta$ satisfies
	$$
	\liminf |\tilde \theta(t)|=0,
	$$		
	with all signals remaining bounded
\end{proposition}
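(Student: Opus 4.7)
The plan is the standard I\&I scheme: verify that the estimator dynamics \eqref{estim} are designed so that each estimation error satisfies a scalar linear time-varying equation of pure-decay type, and then read off both boundedness and convergence.

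First, I define the estimation errors $\tilde\theta_i:=\hat\theta_i-\theta_i$ for $i=1,2$, noting that $\theta$ is constant so $\dot{\tilde\theta}_i=\dot{\hat\theta}_i$. Differentiating \eqref{estim3} and substituting from the plant equation $L\dot x_2=x_1-\theta_1 x_2-x_3 u$ gives
$$
\dot{\hat\theta}_1=\dot z_1-k_1 Lx_2\dot x_2=\dot z_1-k_1 x_2(x_1-\theta_1 x_2-x_3u).
$$
Then I rewrite $z_1$ in \eqref{estim1} as $z_1=\hat\theta_1+\tfrac{k_1}{2}Lx_2^2$ and substitute, so that the two $\tfrac{k_1}{2}Lx_2^3$ contributions cancel, yielding
$$
\dot z_1=k_1 x_2\bigl(x_1-\hat\theta_1 x_2-x_3 u\bigr).
$$
Subtracting the two displays gives the clean error equation $\dot{\tilde\theta}_1=-k_1 x_2^2\tilde\theta_1$. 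The analogous manipulation with \eqref{estim2}, \eqref{estim4} and $C\dot x_3=-\theta_2 x_3+x_2 u$ yields $\dot{\tilde\theta}_2=-k_2 x_3^2\tilde\theta_2$. This cancellation \emph{is} the main technical point; everything else follows from it.

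Second, I read off boundedness. The Lyapunov function $W(\tilde\theta)=\tfrac{1}{2}\tilde\theta_1^2+\tfrac{1}{2}\tilde\theta_2^2$ has time derivative
$$
\dot W=-k_1 x_2^2\tilde\theta_1^2-k_2 x_3^2\tilde\theta_2^2\le 0,
$$
so $\tilde\theta_1,\tilde\theta_2\in\call_\infty$. Consequently $\hat\theta_1,\hat\theta_2$ are bounded, and from \eqref{estim3}--\eqref{estim4} the estimator states $z_1,z_2$ are bounded whenever $x_2,x_3$ are bounded (which is the framework in which the estimator is used in the closed loop of Section~\ref{sec5}). Thus all estimator signals remain bounded.

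Third, the scalar error equations integrate explicitly:
$$
\tilde\theta_i(t)=\tilde\theta_i(0)\exp\!\left(-k_i\int_0^t x_{i+1}^2(s)\,ds\right),\qquad i=1,2.
$$
By hypothesis $x_2\notin\call_2$ and $x_3\notin\call_2$, hence the integrals in the exponents diverge as $t\to\infty$, which gives $\lim_{t\to\infty}\tilde\theta_i(t)=0$ and therefore $\lim_{t\to\infty}|\tilde\theta(t)|=0$, completing the argument.
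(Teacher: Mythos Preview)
Your proof is correct and follows essentially the same route as the paper: differentiate $\hat\theta_i$ using \eqref{estim3}--\eqref{estim4}, substitute the plant dynamics, and obtain the decoupled scalar error equations $\dot{\tilde\theta}_i=-k_i x_{i+1}^2\tilde\theta_i$. Your write-up is in fact somewhat more complete than the paper's, since you add the explicit Lyapunov argument for boundedness and the closed-form integration that makes the role of the $x_{i+1}\notin\call_2$ hypothesis transparent.
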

\begin{proof}
	The time derivative of $\tilde {\theta}_1=\hat{\theta}_1 - \theta_1$ is
	\begin{align*}
		\dot{\tilde{\theta}}_1=& \dot z_1 - k_1Lx_2\dot x_2\\
		=&k_1x_2(x_1 - z_1x_2 +{k_1 \over 2}  L x^3_2  - x_3 u)- k_1x_2(-(\hat\theta_1-\tilde\theta_2)x_2+x_1-x_3u)\\
		=&-k_1x_2(x_3 u  +\hat{\theta}_1x_2 )- k_1x_2(-(\hat\theta_1-\tilde\theta_1)x_2+x_1-x_3u)\\
		=&-k_1 x_2^2\tilde{\theta}_1,
		%\dot{\tilde{\theta}}_2=& \dot z_3 - k_1Lx_3\dot x_3	
	\end{align*}
	where \eqref{estim3} was used to obtain the third equality. With a similar procedure for $\tilde {\theta}_2=\hat{\theta}_2 - \theta_2$ we get, 
	\begin{align*}
		\dot{\tilde{\theta}}_2=& \dot z_2 - k_2Cx_3\dot x_3\\
		=&k_2x_3(x_2 u -  z_2x_3 +{k_2 \over 2} C x^3_3)- k_2x_3(-(\hat\theta_2-\tilde\theta_2)x_3+x_2u)\\
		=&k_2x_3(x_2 u  -\hat{\theta}_2x_3 )- k_2x_3(-(\hat\theta_2-\tilde\theta_2)x_3+x_2u)\\
		=&-k_2 x_3^2\tilde{\theta}_2.
	\end{align*}
	It follows from the last two error dynamics that $\tilde{\theta}\to 0$ iff $x_i\not\in\mathcal{L}_2,\;i=1,2$. 
\end{proof}
\subsection{Adaptive PI-PBC}

\lab{subsec42}
In this subsection we propose an adaptive PI PBC, which ensures {\em practical stability} of the desired equilibrium point---that is, the error trajectories eventually enter a residual neighborhood of the origin \cite{LAKLEEMARbook}. 

\begin{proposition}\em\label{pro3}
	Fix a desired, constant value for $x_3$ as $x_3^\star >0$. Consider \eqref{sys2} in closed-loop with the adaptive PI-PBC controller 
	\begsubequ
	\lab{adapi}
	\begali{
		\lab{adadotxc}
		\dot x_c  & =\hat y_N(x)\\
		\lab{adau}
		u &= - K_P \hat y_N(x) - K_I x_c,
	}
	\endsubequ
	where the input signal to the PI is defined as
	\begequ
	\lab{hatyn}
	\hat y_N( x)= \hat x^\star _2 x_3 -x^\star _3 x_2.
	\endequ
	with $\hat x_2^\star=d_1(\hat x^\star_1)$ where $\hat x^\star_1$ is the solution of the estimated equilibrium equation
	\begalis{
		d_1(\hat x^\star_1)\hat x^\star_1-\hat {\theta}_2 (x^\star_{3})^2-\hat {\theta}_1 d^2_1(\hat  x_1^\star)&=0,
	}
	and the estimates $\hat \theta_1,\hat \theta_2$ are obtained with I\&I estimator \eqref{estim}. Assume $x_2(t) \notin \call_2$ and $x_3(t) \notin \call_2$. Then, the zero equilibrium of the closed-loop error system is {\em practically stable} and the size of the residual set can be reduced increasing $K_P$.
\end{proposition}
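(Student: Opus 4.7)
The plan is to mirror the Lyapunov argument used in the proof of Proposition \ref{pro1}, tracking the mismatch between the true passive output $y_N(\tilde x)$ and the adaptive one $\hat y_N(x)$ via the equilibrium estimation error $e=\hat x^\star - x^\star$ introduced in \eqref{e}. The linearity of $\hat y_N$ in $x$ together with $y_N(x^\star)=0$ yields the key decomposition
$$
\hat y_N(x) \;=\; y_N(\tilde x) \;+\; e_2\, x_3,\qquad e_2 := \hat x_2^\star - x_2^\star,
$$
so that the adaptive controller \eqref{adapi} produces $\tilde u = u - u^\star = -K_P \hat y_N(x) - K_I \tilde x_c$ and $\dot{\tilde x}_c = \hat y_N(x)$, with the error dynamics \eqref{xtilde} unchanged in form. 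I would then reuse the Lyapunov candidate
$$
V(\tilde x,\tilde x_c) \;=\; \tfrac{1}{2}\tilde x^\top \mathcal{Q}\tilde x + \tfrac{K_I}{2}\tilde x_c^2.
$$

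Next, I would carry out the same chain of identities as in the proof of Proposition \ref{pro1}, noting that Lemma \ref{lem1} again annihilates the $d_1(x_1)-d_1(x_1^\star)$ contribution. The two occurrences of $\hat y_N(x)$ in $\dot V$ (one from $\tilde u y_N(\tilde x)$, one from $K_I \tilde x_c \hat y_N(x)$) are expanded via the decomposition above, producing
$$
\dot V \;\leq\; -\theta_1 \tilde x_2^2 - \theta_2 \tilde x_3^2 - K_P y_N^2(\tilde x) \;-\; K_P e_2\, x_3\, y_N(\tilde x) \;+\; K_I \tilde x_c\, e_2\, x_3.
$$
Splitting $x_3 = x_3^\star + \tilde x_3$ and invoking Young's inequality, I would absorb the $e_2$--coupled cross terms into the negative quadratic terms $-\theta_2\tilde x_3^2$ and $-K_P y_N^2(\tilde x)$ (the latter since $y_N$ is linear in $\tilde x_2, \tilde x_3$), at the price of an additive residual of the form $\gamma(K_P)\, e_2^2$, where the coefficient $\gamma(K_P)$ is a decreasing function of $K_P$. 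This produces an ISS-type bound for the $(\tilde x,\tilde x_c)$ subsystem with respect to the exogenous signal $e$.

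To close the argument I would invoke Proposition \ref{pro2} (under the hypotheses $x_2,x_3\notin\call_2$ already assumed) to obtain $\tilde\theta(t)\to 0$, and then apply the implicit function theorem to the equilibrium equation $p(\hat x_1^\star,x_3^\star,\hat\theta)=0$ appearing in \eqref{eq}. Non-singularity of $\partial_{x_1} p$ at the true equilibrium follows from the strict monotonicity of $V_{fc}$ (Lemma \ref{lem1}), so the map $\hat\theta\mapsto \hat x^\star$ is continuous near $\theta$ and therefore $e(t)\to 0$. Cascading this vanishing perturbation with the ISS estimate yields practical stability of the zero equilibrium of the error system, with an ultimate bound proportional to $\gamma(K_P)$, and hence shrinkable by increasing $K_P$, as claimed.

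The main obstacle I anticipate is the cross term $K_I \tilde x_c e_2 x_3$: the controller state $\tilde x_c$ is not a priori bounded, so the Young inequality absorption cannot be carried out in isolation and must be coordinated with a forward-invariance argument on a suitable sub-level set of $V$ to first certify boundedness of the full error vector $(\tilde x,\tilde x_c)$. A secondary, but related, subtlety is excluding spurious branches of $p(\hat x_1^\star,x_3^\star,\hat\theta)=0$ during transients when $\hat\theta$ may wander away from $\theta$; here the monotonicity of $V_{fc}$ (Lemma \ref{lem1}) must again be leveraged to guarantee that the physically meaningful branch of $\hat x^\star$ is single-valued and continuously tracked.
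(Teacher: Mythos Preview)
Your overall plan---decompose $\hat y_N = y_N + (\text{term in }e)$, reuse the Lyapunov candidate of Proposition~\ref{pro1}, invoke Proposition~\ref{pro2} to get $\tilde\theta\to 0$, and pass from $\tilde\theta\to 0$ to $e\to 0$ via continuity of the equilibrium map---is exactly the skeleton of the paper's argument. The difference is in one design choice that, as you correctly anticipated, is precisely where your obstacle sits.

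You center the controller state at $x_c^\star=-K_I^{-1}u^\star$ (the known-parameter equilibrium). With that choice your $\dot V$ calculation is correct and you inevitably pick up the cross term $K_I\,\tilde x_c\,e_2\,x_3$, which you cannot absorb because there is no negative $\tilde x_c^2$ term available. The paper avoids this by centering the controller state at the \emph{adaptive} equilibrium $\bar x_c$ (with $\bar u:=-K_I\bar x_c$) and working with $x_c^e:=x_c-\bar x_c$ and $u_e:=u-\bar u$. That re-centering makes the $V_{x_c}$ derivative collapse to $-K_P\hat y_N^2-u_e\hat y_N$, so the combined bound becomes
\[
\dot V_x+\dot V_{x_c}\;\le\;-\tilde x^\top\calr\tilde x-K_P\hat y_N^2(x)+\xi,\qquad \xi:=-u_e\hat y_N(x)+\tilde u\,y_N(x),
\]
with \emph{no} $x_c$ appearance in the residual $\xi$. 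After $e\to 0$ (your implicit-function step), $\hat y_N-y_N\to 0$ and $\xi$ reduces to the constant-times-linear term $(\bar u-u^\star)\hat y_N$, which is dominated by $-K_P\hat y_N^2$; this is where the $K_P$-tunable residual set comes from. In short, the paper trades your Young-inequality/forward-invariance program for a smarter reference point that cancels the troublesome controller-state coupling outright; your route is not wrong, but the obstacle you flagged is genuine, and the paper's $\bar x_c$ trick is precisely the device that removes it.

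One minor inconsistency in your write-up: if your ISS bound has residual $\gamma(K_P)e_2^2$ and $e_2(t)\to 0$, then the ultimate bound is zero, not ``proportional to $\gamma(K_P)$''. The $K_P$-dependence of the residual set in the paper's version comes from the \emph{constant} offset $\bar u-u^\star$ in $\xi$, not from a vanishing $e$.
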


\begin{proof}
	First, notice that the input to the adaptive PI-PBC given in \eqref{hatyn} may be written as 
	\begali{
		\nonumber
		\hat y_N( x) =&x^\top\mathcal{J}_1 \hat x^\star \\
		\nonumber	=&x^\top\mathcal{J}_1(e+x^\star) \\
		\lab{hatynyn}	=&y_N(x)+x^\top\mathcal{J}_1e
	}
	where we used \eqref{e} in the first identity and \eqref{yn} in the second one. Second, let us denote the equilibrium value of $x_c$ as $\bar x_c \in \rea$ and the equilibrium error as
	$$
	x_c^e:=x_c - \bar x_c.
	$$
	Denote the equilibrium value for $u$ as $\bar u \in \rea$, from \eqref{adau}, we see that
	$$
	\bar u = - K_I \bar x_c,
	$$
	and denote the input error signal as
	$$
	u_e:=u-\bar u.
	$$   
	
	From the proof of Proposition \ref{pro1} we have that the time derivative of the function $V_x(\tilde x):=\hal \tilde x^\top \mathcal{Q}\tilde x$, along the trajectories of \eqref{xtilde}, satisfies the bound
	\begalis{
		\dot V_x &\leq - \tilde x^\top \calr \tilde x + \tilde u y_N(x).
	} 
	On the other hand, the time derivative of the function $V_{x_c}(x^e_c):={K_I \over 2} (x^e_c)^2$ along the trajectories of \eqref{adadotxc} yields 
	\begalis{
		\dot V_{x^e_c}&= K_I x^e_c \dot x_c \\
		&= K_I x_c \hat y_N(x) - K_I \bar x_c \hat y_N(x)\\
		&=  - [u+K_P \hat y_N(x)]\hat y_N(x)  - K_I \bar x_c \hat y_N(x)\\
		&=  - K_P \hat y^2_N(x) -(u + K_I \bar x_c)  \hat y_N(x)\\
		&=  - K_P \hat y^2_N(x) -u_e  \hat y_N(x). 
	}

	Combining these two equations we get
	\begequ
	\lab{dotv}
	\dot V_x + \dot V_{x_c} \leq - \tilde x^\top \calr \tilde x- K_P  \hat y_N^2(x) +\xi, 
	\endequ
	where we defined the signal 
	\begequ
	\lab{xi}
	\xi:= -u_e  \hat y_N(x) + \tilde u y_N(x).
	\endequ
	Consider now the following chain of implications
	\begin{align}
		x_2(t)  \notin \call_2\;\mbox{and}\; x_3(t) \notin \call_2 & \quad \dif \quad |\tilde \theta(t)|\;\to\;0 \\
		& \quad \ttt \quad |e(t)|\;\to\;0 \\
		& \quad \fff \quad |\hat y_N(t) - y_N(t)|\;\to\;0 \\
		& \quad \hhh \quad |\xi(t)-(\bar u - u^\star)\hat y_N(x(t))|\;\to\;0,
	\end{align}
	where the last implication stems from the identity
	$$
	\tilde u-u_e= \bar u - u^\star.  
	$$
	The proof is completed replacing the last limit in the third right hand term of the inequality \eqref{dotv} and noting that the right hand term of this equation contains a term $- K_P \hat y^2_N(x)$ that, for large $x$, dominates the linear term $(\bar u - u^\star)\hat y_N(x)$. 
\end{proof}

%%%%%%%%%%
\section{Simulation Results}
\lab{sec5}
%%%%%%%%5
%
The system is described in detail in \cite{BELetal}---the reader is referred to it for further details on the components and their characteristics. In Fig. \ref{vfc_ifc}, an experimental curve of its PEMFC is shown. From the model \eqref{Vfc}, an estimation of that curve was obtained and its graph is depicted in the same figure. The parameters $c_i$ of the estimated curve were obtained using the Least Squares method.% This curve was used in our simulations.

We simulate three different scenarios. In all of them, we consider that the PEMFC behaves according to the estimated curve. The numerical values of the parameters $c_1$ to $c_5$, as well as the other parameter of the system used in the simulations, are given in Table \ref{table1}. The first simulation consists in testing the regulation performance when $R_L$ and $R_p$ are both known, therefore, the equilibrium points are also available from their computation. Later,  we assess robustness of the controller when there is a load variation and no update of the new load is available. Finally, in the last scenario, we simulate the adaptive controller, that is, we estimate $R_L$ and $R_p$ and compute the equilibrium points from those values. The load is varied during the simulation.

\begin{table}[b!]
	\centering
	\begin{tabular}[t!]{ cc } 
		\toprule
		\midrule
		Parameter & Nominal value\\
		\midrule
		$C_{fc}$ & $50{\mathrm mF}$ \\ 
		$C$ &  $1.5{\mathrm mF}$ \\ 
		$L$ &   $36.1{\mathrm \mu H}$\\
		$R_L$ & $4.608\Omega$ \\
		$R_p$ & $0.1\Omega$ \\
		$K_I$& 0.001\\
		$K_P$& 1\\
		$c_1$& 39.3543\\
		$c_2$ & 2.5825 \\
		$c_3$ & 0.1808\\
		$c_4$ & 0.0046\\
		$c_5$ & 1.2610\\
		\midrule
		\toprule
	\end{tabular}
	\caption{System parameters}
	\label{table1}
\end{table}
\subsection{Regulation performance when $R_L$ and $R_p$ are known}
In this test, we numerically verified the PI-PBC of Proposition \ref{pro1}. It is assumed that $R_L$ and $R_p$ are known and $x^\star$ is computed from these values. To avoid large overshoots, the system's initial conditions were set as follows: $(x(0),x_c(0))=(40\mathrm{V}, 10\mathrm{A}, 30\mathrm{V},0)$. Since the values of duty cycle are restricted to the interval  $(0,1)$, the control signal $u$ was limited by a saturation block before entering into the system.  %The controller gains are shown in Table \ref{table1}.

During the simulation, we varied the voltage setpoint. Initially, we set $x_{3}^\star=45\mathrm{V}$. At time $t=0.25\mathrm{s}$,  that setpoint takes the new value of $x_{3}^\star=60\mathrm{V}$---at this time, the equilibrium values are updated in the simulation according to this new setpoint and their values are shown in the figure. As it is seen from Fig. \ref{reg}, the voltage regulation error converges to zero in both cases. The same can be observed about the error of the two other variables with respect to their equilibrium value. On the other hand, the control signal $u$ exhibits saturation when the simulation started, but it enters its valid operation interval quickly afterwards.

\begin{figure}[t!]
	\centering
	\includegraphics[scale=0.60]{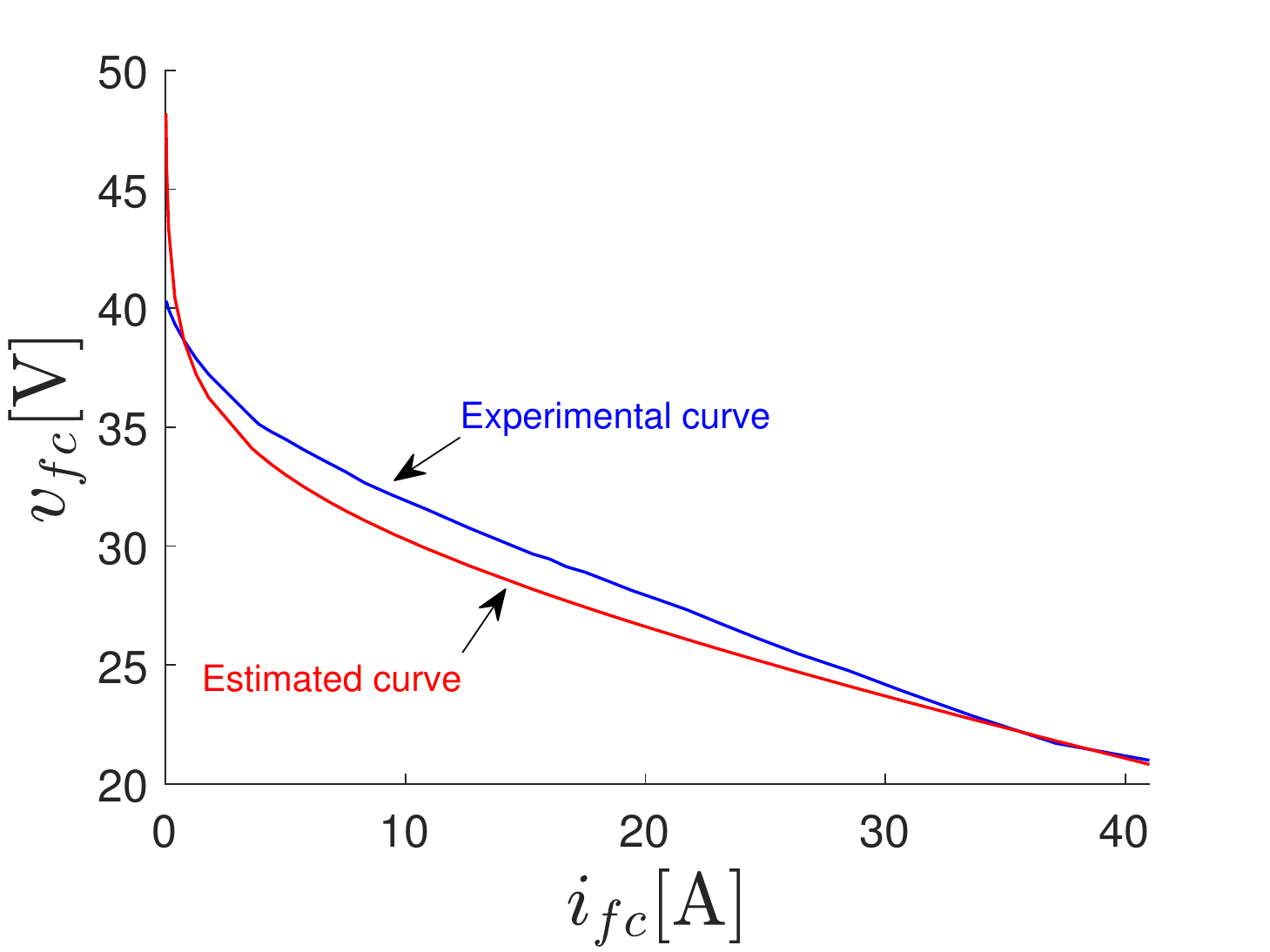}
	\caption{Experimental curve and its estimated curve \eqref{Vfc}---see Table \ref{table1} for the numerical values of $c_i$.}
	\label{vfc_ifc}
\end{figure}
\begin{figure*}
	\centering
	\includegraphics[scale=0.60]{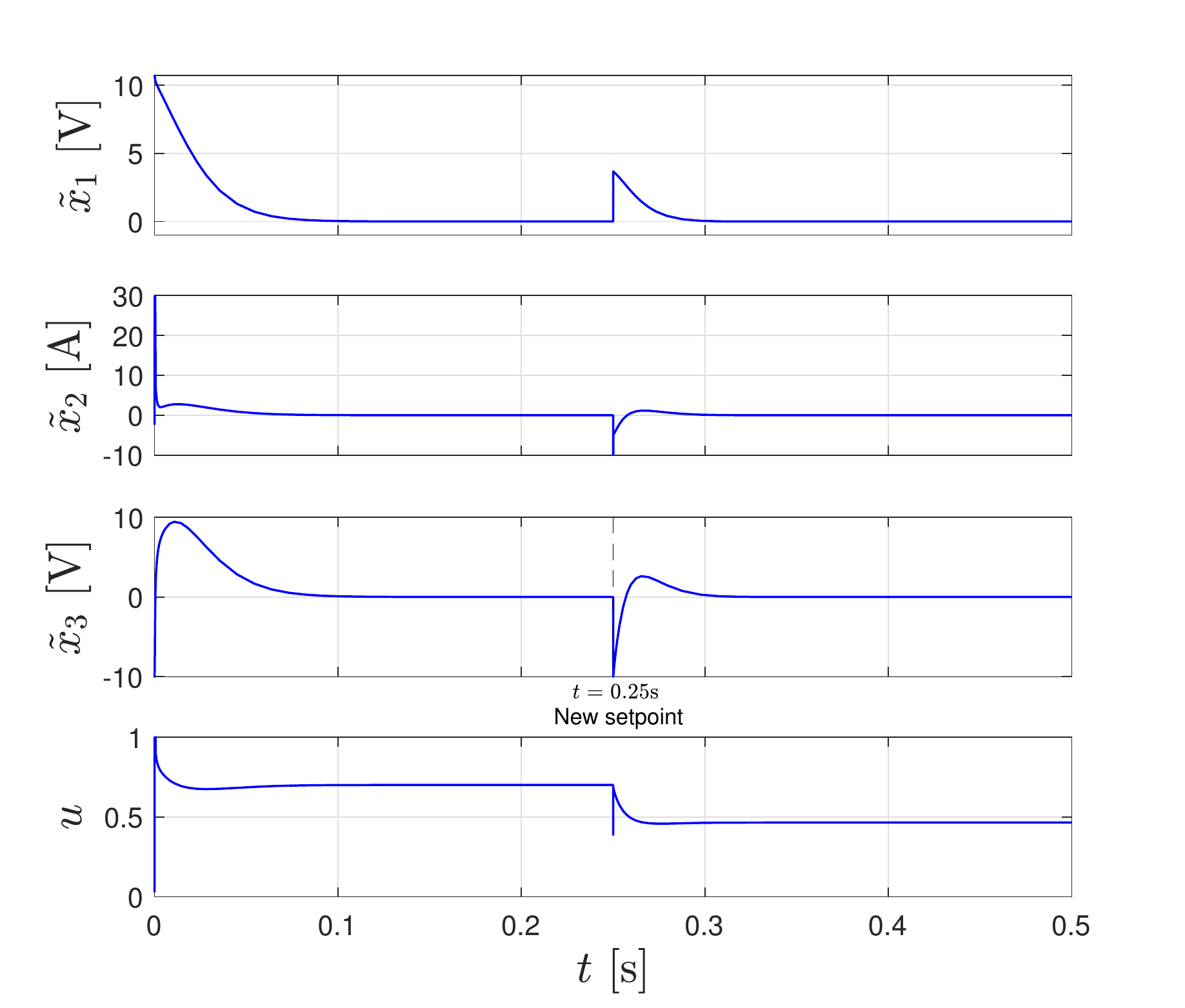}
	\caption{State regulation errors and control signal when all parameters are known. Equilibrium values: a)for $t<0.25$s, $[x^\star]_1=(29.28\mathrm{V},12.38\mathrm{A},40\mathrm{V})$, b)for $t\geq 0.25$s, $[x^\star]_2=(25.6\mathrm{V},23.31\mathrm{A},50\mathrm{V})$. }   %\rom{Mover $x^\star$ a esta linea. Poner el $\star$ arriba.}}
	\label{reg}
\end{figure*}

\subsection{Robustness to load changes}
In this test, we assess robustness to load variations of the PI-PBC. We assumed that $R_p$ and $R_L$ are known when simulation starts. From the beginning, the system is operating in steady state with the output voltage $x_3$ regulated at a desired value of $40$V. At $t=0.2$s, a change in $R_L$ is produced. The load varies from its nominal value of $4.608\Omega$ to $3.9168\mathrm{\Omega}$. The new value of $R_L$ is not updated in the controller, yielding an erroneous  value of $x^\star$. This results in a steady-state regulation error. Figure \ref{rob} shows how the voltages and current evolve during the simulation. It can be observed from the plot that, after the load variation, the steady-state output voltage decays to $35$V, that is, $5$V below its reference. We remark, however, that the voltages $x_1$ and $x_3$, and the current $x_2$ remained bounded.

\begin{figure}[t!]
	\centering
	\includegraphics[scale=0.60]{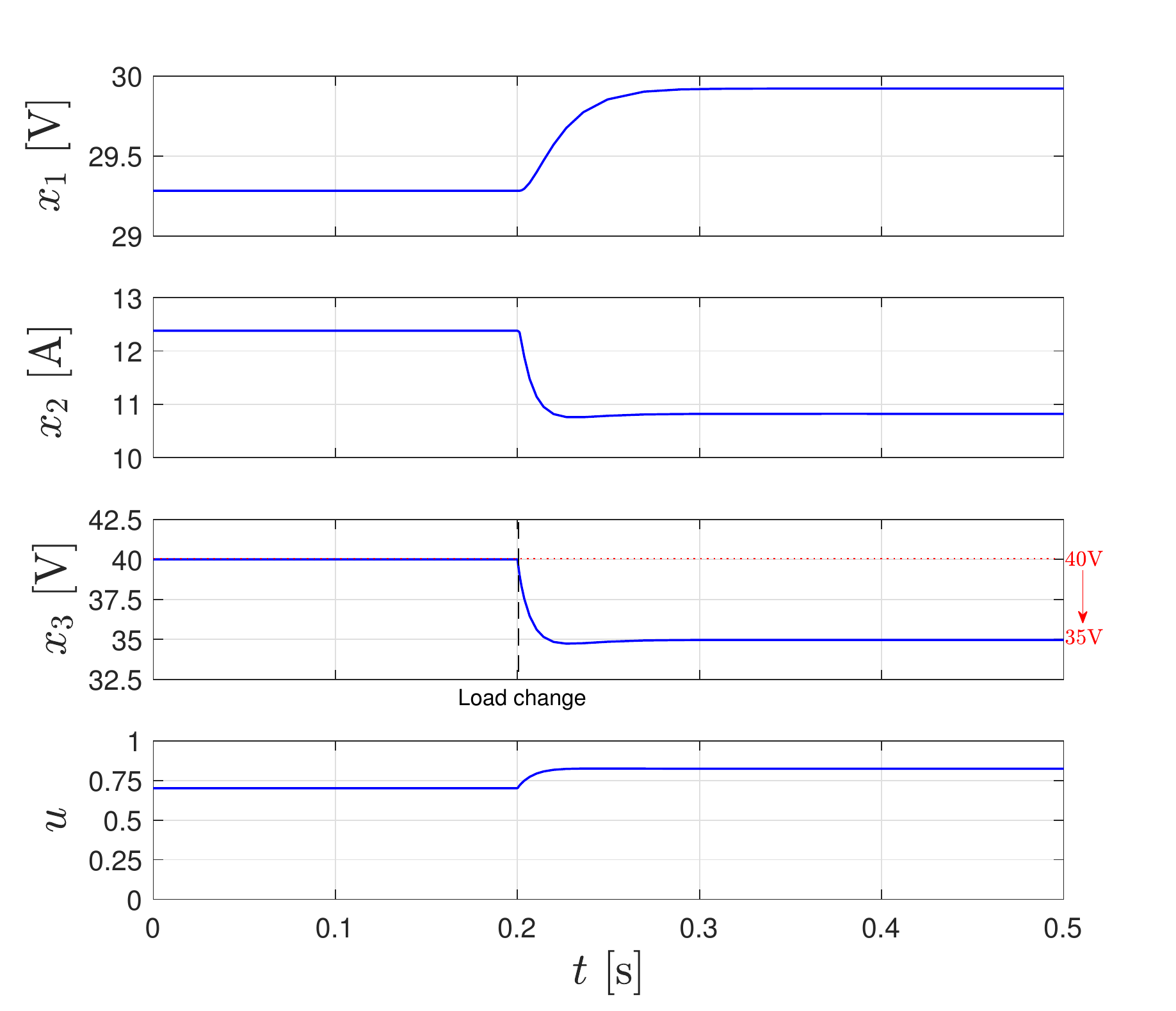}
	\caption{Regulation performance of the PI-PBC when $R_L$ changes.}
	\label{rob}
\end{figure}

\subsection{Regulation performance of the adaptive controller}
In this last test, $R_L$ and $R_p$ are assumed to be unknown. We simulated the proposed I\&I estimator operating simultaneously with the PI-PBC---see Proposition \ref{pro3}. Since the PI-PBC requires the equilibrium values, estimations of them were calculated from estimates of the resistances provided by the estimator. That is, from \eqref{eq}, for each $\hat\theta$, we compute
$$
p(x_1,x_3^\star,\hat\theta):=d_1(x_1)x_1 - \hat{\theta}_2x_{3\star}^2-\hat{\theta}_1d_1^2(x_1),
$$
for different values of $x_1$ within the operation range $x_1\in[21,48]$ and we arrange those values in a vector. Using this vector, an approximate of the root of $p(\cdot)$, i.e., the estimate of $x_1^\star$, was online computed from
$$
\hat{x}_1^\star=\underset{x_1}{\argmin}\;|p(x_1,x_{3\star},\hat\theta)|.
$$
Notice that computation of $p(x_1,x_3^\star,\hat\theta)$ requires calculating $d(x_1)$ in advance. Due to the difficulty of finding an analytical expression of $d_1(x_1)$, a lookup table was employed that associates for each $x_1$ its corresponding value $d_1(x_1)$.  The estimate of $x_2^\star$ was obtained from the equality $\hat{x}_2^\star=d_1(\hat{x}_1^\star)$.

We initialized the system at $(x(0),x_c(0),\hat \theta)=(40\mathrm{V}, 10\mathrm{A}, 30\mathrm{V},0,0,0)$. The output voltage setpoint was set at $x_3^\star=40$V and maintained to that level during the simulation. The initial value of $R_L$ was that shown in the table. At $t=0.25$s, $R_L$ was changed to $85\%$ of its nominal value.  Fig. \ref{adaptive1} depicts how $\tilde\theta$ evolves before/after the load change for different gain values $k_i$. As it can be expected from the error dynamics,  the greater the values, the faster the convergence. Notice that, when $k_1=k_2=0.01$, it is not possible for the estimator to converge during the simulation time. On the other hand, when $k_1=k_2=10$ the convergence is almost instantaneous. 

In Fig. \ref{adaptive2}, we show the results of the adaptive PI-PBC when the estimator gains are set to $k_1=k_2=10$. As depicted, the system is regulated to the desired output voltage once $R_L$ is estimated. As mentioned, at $t=0.25$s, the load varies from  its nominal value to $R_L=3.9168\mathrm{\Omega}$. A transient behavior is produced due to this variation, however, the output voltage converges back to the imposed $x_3^\star$---the error $\tilde x_3$ settles down to zero. This response contrasts with that of the previous test in which the controller was incapable of maintaining the voltage at the desired value after this load variation.

\begin{figure}[h!]
	\centering
	\begin{subfigure}[Convergence of $\tilde{\theta}$ for different estimator gains $k_i$.]
		{\includegraphics[scale=0.62]{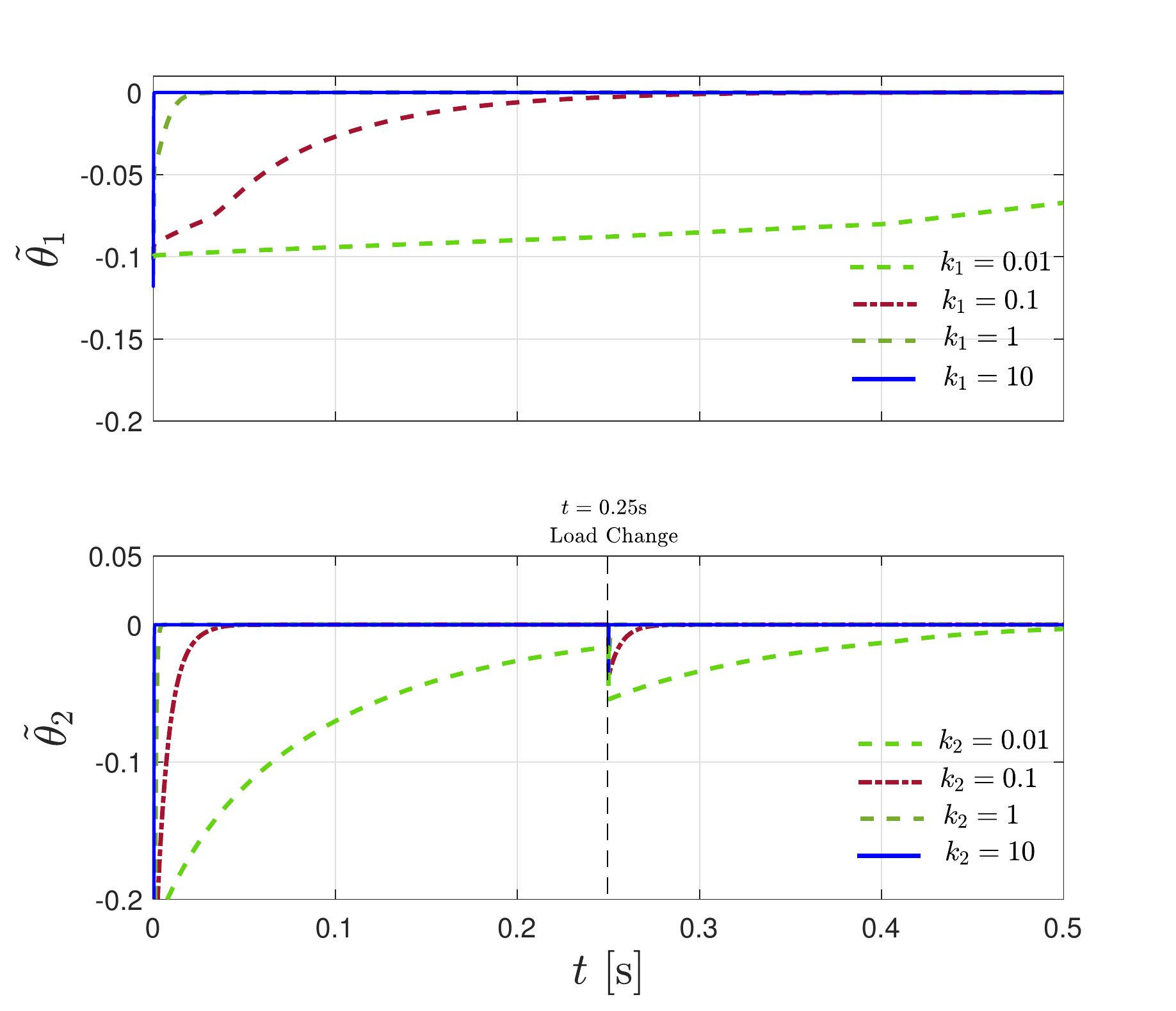}\label{adaptive1}}
	\end{subfigure}
	\begin{subfigure}[$\tilde x$ and $u$ when $k_1=k_2=10$.] 
		{\includegraphics[scale=0.62]{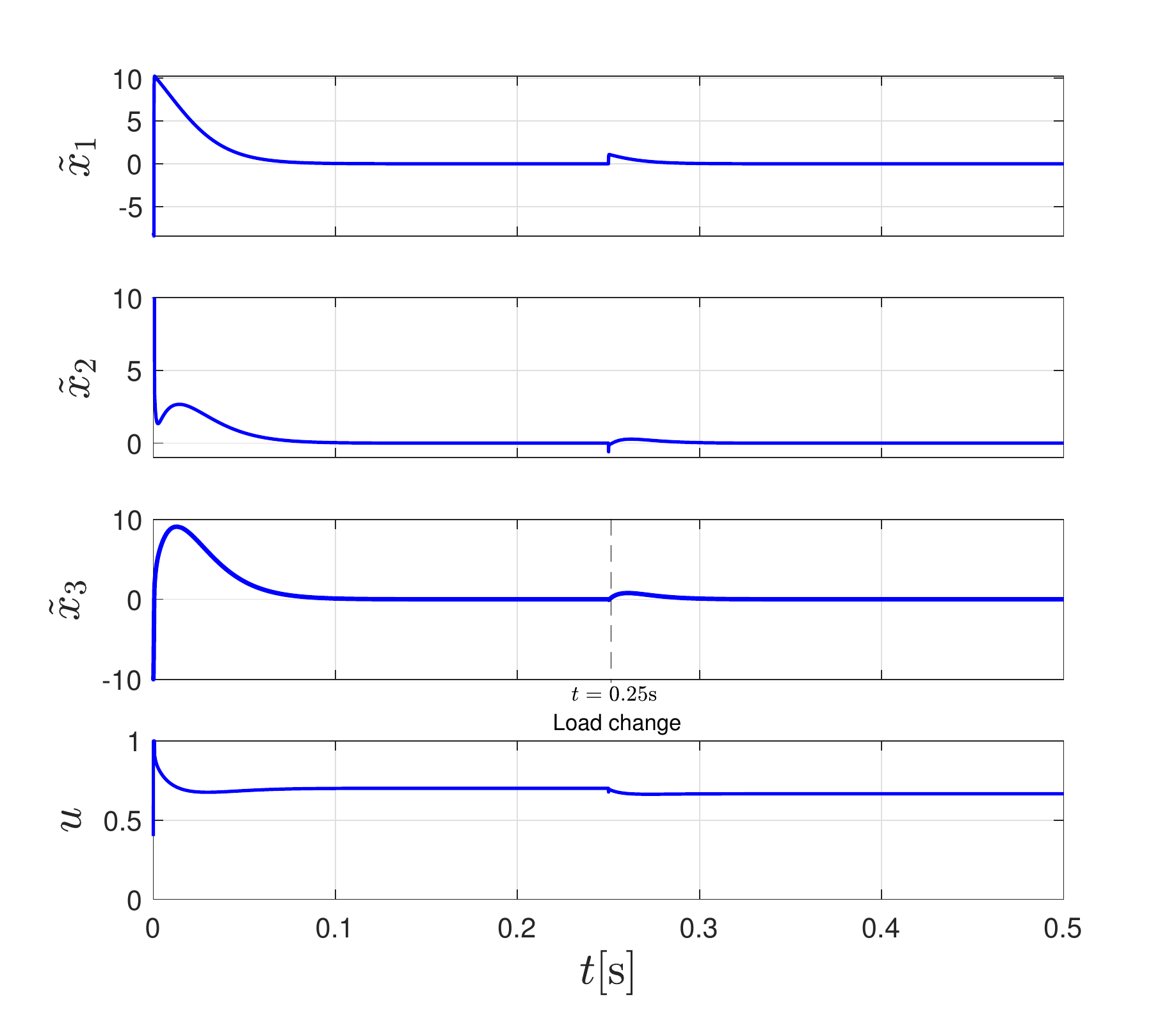}\label{adaptive2}}
	\end{subfigure}
	\caption{Regulation performance of the adaptive PI-PBC when $R_L$ changes.}\label{adaptive}
\end{figure}

%%%%%%%%%%%%%%%%%%
\section{Concluding Remarks}
\lab{sec6}
%%%%%%%%%%%%%%%%%%
%

In this contribution, we propose an adaptive controller that regulates the output voltage of a system composed of a fuel cell, boost converter and a resistive load. The behavior of the fuel cell is modeled by a static mapping relating the cell's current with its voltage. Benefiting from the monotonicity of this mapping, we show that---in spite of the presence of the cell's current---the passive output of the system is still preserved, thus, it can be regulated by a PI-PBC. Furthermore, an adaptive scheme is proposed whose operation combined with that of the PI-PBC allows the estimation of the equilibrium points, required by the controller, and the regulation of the output voltage at a desired level. Stability of the overall system is proved and numerical results are presented. Ongoing experimental implementation is intended for physical validation of the proposed adaptive PI-PBC and it will be reported in the near future.\\

%\rom{Hay muchos errores en las refs: espacios en blanco, el libro Larminie/Dicks, ref [12,16], etc.}

\bibliographystyle{ieeetr}
\bibliography{refs}

\end{document}